\DeclareMathOperator*{\argmin}{arg\,min}
\newcounter{algorithmctr}[section]
\renewcommand{\thealgorithmctr}{\thesection.\arabic{algorithmctr}}
{\refstepcounter{algorithmctr}\begin{list}{}{%
\setlength{\rightmargin}{0\linewidth}%
\setlength{\leftmargin}{.05\linewidth}
\setlength{\itemsep}{1pt}
\setlength{\parskip}{0pt}
\setlength{\parsep}{0pt}}%
\rmfamily\small
\item[]{\setlength{\parskip}{0ex}\hrulefill\par%
\nopagebreak{\bfseries\textsf{Algorithm \thealgorithmctr~}}}}%
{{\setlength{\parskip}{-1ex}\nopagebreak\par\hrulefill} \end{list}}
\newtheorem{assumption}{Assumption}
\newtheorem{theorem}{Theorem}
\title{\LARGE \bf
Task Decomposition for Iterative Learning Model Predictive Control
}
\author{Charlott Vallon and Francesco Borrelli
\thanks{This research was sustained in part by fellowship support from the National Physical Science Consortium and the National Institute of Standards and Technology.}
\thanks{Charlott Vallon and Francesco Borrelli are with the Department of Mechanical Engineering, University of California, Berkeley, Berkeley, CA 94701, USA
        {\tt\small \{charlott, fborrelli\} $@$ berkeley.edu}}%
}%
\begin{document}

\maketitle
\thispagestyle{empty}
\pagestyle{empty}

\begin{abstract}
A task decomposition method for iterative learning model predictive control is presented. We consider a constrained nonlinear dynamical system and assume the availability of state-input pair datasets which solve a task $\mathcal{T}1$. 
Our objective is to find a feasible model predictive control policy for a second task, $\mathcal{T}2$, using stored data from $\mathcal{T}1$. Our approach applies to tasks $\mathcal{T}2$ which are composed of subtasks contained in $\mathcal{T}1$. 
In this paper we propose a formal definition of subtasks and the task decomposition problem, and provide proofs of feasibility and iteration cost improvement over simple initializations.
We demonstrate the effectiveness of the proposed method on autonomous racing and robotic manipulation experiments.
\end{abstract}

\section{Introduction}
Control design for systems repeatedly performing a single task has been studied extensively. Such problems arise frequently in practical applications \cite{AA, WangGaoDoyleILC2009} and examples range from autonomous cars racing around a track \cite{stanford, CarrauLinigerZhangLygerosECC2016, rosolia2017autonomousrace} to robotic system manipulators \cite{horowitz1993learning, robotilc1, van2010superhuman, wang2013adaptive}.
Iterative Learning Controllers (ILCs) aim to autonomously improve a system's closed-loop reference tracking performance at each iteration of a repeated task, while rejecting periodic disturbances \cite{AA, batchILC, newcite}.
In classical ILC, the controller uses tracking error data from previous task iterations to better track a provided reference trajectory during the current iteration.
Recent work has also explored reference-free ILC strategies for tasks whose goals are better defined in terms of an economic metric, rather than a reference trajectory. The controller again uses previous iteration data to improve closed-loop performance with respect to the chosen performance metric.
Examples include autonomous racing tasks (e.g. ``minimize lap time'') \cite{rosolia2017autonomousrace, melanie }, or optimizing flight paths for tethered energy-harvesting systems (e.g. ``maximize average power generation'') \cite{vermillion}. 

The aforementioned iterative learning methods require either a reference trajectory to track (classical ILC) or a feasible trajectory with which to initialize the iterative control algorithm (reference-free ILC). 
If the task changes, a new trajectory needs to be designed to match the new task. This can be challenging for complex tasks.

Many methods exist to use data collected from a task to efficiently solve variations of that task, including model-based and model-free methods. 
Here, we focus specifically on model-based methods for using stored trajectories from previous tasks in order to find feasible trajectories for new tasks. 
The authors in \cite{BG} propose running a desired planning method in parallel with a retrieve and repair algorithm that adapts reference trajectories from previous tasks to the constraints of a new task. Retrieve and repair was shown to decrease overall planning time, but requires checking for constraint violations at each point along a retrieved trajectory.
In \cite{6}, environment features are used to divide a task and create a library of local trajectories in relative state space frames. These trajectories are then pieced back together in real-time according to the features of the new task environment.
A trajectory library built using differential dynamic programming is used in \cite{liu2009standing} to design a controller for balance control in a humanoid robot. At each time step, a trajectory is selected from the library based on current task parameter estimates and a k-nearest neighbor selection scheme. A similar method is explored in \cite{7}, where differential dynamic programming is combined with receding horizon control. While these methods can decrease planning time, they verify or interpolate saved trajectories at every time step, which can be inefficient and unnecessary.

The authors in \cite{5} propose piecing together stored trajectories corresponding to discrete system dynamics only at states of dynamics transition. However, this method only applies to discontinuities in system dynamics, and does not generalize to other task variations.
In \cite{pereida2018data}, a static map is learned between a given reference trajectory and the input sequence required to make a linear time invariant system track that reference trajectory. Once learned, this map can be used to determine an input sequence that lets the linear system track a new reference trajectory.  

In this paper, our objective is to find a feasible trajectory to smartly initialize an Iterative Learning Model Predictive Controller (ILMPC) \cite{rosolia2016learning} for a new task, using data from previous tasks. 
ILMPC is a type of reference-free ILC that uses a \emph{safe set} to design a model predictive control (MPC) policy for an iterative control task. The ILMPC safe set is initialized using a feasible task trajectory.

We consider a constrained nonlinear dynamical system and assume the availability of a dataset containing states and inputs corresponding to multiple iterations of a task $\mathcal{T}1$.
This dataset can be stored explicitly (e.g. by human demonstrations~\cite{learnbydem} or an iterative controller \cite{rosolia2016learning}) or generated by roll-out of a given policy (e.g. a hand-tuned controller).
We introduce a Task Decomposition for ILMPC algorithm (TDMPC), and show how to use the stored $\mathcal{T}1$ dataset to efficiently construct a non-empty ILMPC safe set for task $\mathcal{T}2$ (a new variation of $\mathcal{T}1$), containing feasible trajectories for $\mathcal{T}2$. 

The contributions of this paper are twofold: 
\begin{enumerate}
    \item We first present how to build the aforementioned $\mathcal{T}2$ safe set using the TDMPC algorithm. TDMPC reduces the complexity to adapt trajectories from $\mathcal{T}1$ to a new task $\mathcal{T}2$ by decomposing task $\mathcal{T}1$ into different modes of operation, called subtasks. 
    The stored $\mathcal{T}1$ trajectories are adapted to $\mathcal{T}2$ only at points of subtask transition, by solving one-step controllability problems. 
    \item We prove that the resulting safe set based ILMPC policy is feasible for $\mathcal{T}2$, and the corresponding closed-loop trajectories have lower iteration cost compared to an ILMPC initialized using simple methods.
\end{enumerate}

\section{Problem Definition} \label{sec:PD} 
\subsection{Safe Set Based ILMPC}\label{ssec:ssilmpc}
We consider a system
\begin{align}\label{eq:VehicleModel}
    x_{k+1} &= f(x_k, u_k),
\end{align}
where $f(x_k, u_k)$ is the dynamical model, subject to the constraints
\begin{align}\label{eq:VehicleModelConstr}
    x_k \in \mathcal{X},~ u_k \in \mathcal{U}.
\end{align}
The vectors $x_k$ and $u_k$ collect the states and inputs at time step $k$. We pick a set $\mathcal{P} \subset \mathcal{X}$ to be the target set for an iterative task $\mathcal{T}$, performed repeatedly by system (\ref{eq:VehicleModel}) and defined by the tuple
\begin{align}\label{eq:task}
    \mathcal{T} = \left\{ \mathcal{X}, \mathcal{U}, \mathcal{P}\right\}.
\end{align}
\begin{assumption}\label{ass:controlInvariant}
$\mathcal{P}$ is a control invariant set \cite[Sec 10.9]{borrelliText}:
\begin{equation}
    \forall x_k \in \mathcal{P},~ \exists u_k \in \mathcal{U}~ : ~ x_{k+1} = f(x_k, u_k) \in \mathcal{P}. \nonumber
\end{equation}
\end{assumption}
Each task execution is referred to as an iteration.
The goal of an ILMPC is to solve at each iteration the optimal task completion problem:
    \begin{align}\label{eq:minTimeOC}
        V^\star_{0 \rightarrow T} (x_0) = & \min_{T, u_0, ..., u_{T-1}} \sum_{k=0}^T h(x_k, u_k) \\
        & ~~~~~~ \mathrm{s.t.}~~~~~ x_{k+1} = f(x_k, u_k), \nonumber \\ 
        & ~~~~~~~~~~~~~~~ x_k \in \mathcal{X}, u_k \in \mathcal{U} ~~ \forall k \geq 0,\nonumber \\
        & ~~~~~~~~~~~~~~~ x_{T} \in \mathcal{P}, \nonumber
    \end{align}
where $V^\star_{0 \rightarrow T} (x_0)$ is the optimal cost-to-go from the initial state $x_0$, and $h(x_k, u_k)$ is a chosen stage cost.

At the $j$-th successful task iteration, the vectors
\begin{subequations}\label{eq:iterationvecs}
    \begin{align}
E^j(\mathcal{T}) & = [{\bf{x}}^j, {\bf{u}}^j] \\ 
    {\bf{x}}^j & = [x_{0}^j,x_{1}^j,...,x^j_{T^j}], ~x^j_k \in \mathcal{X}~~ \forall k \in [0, {T^j}], \nonumber\\ & ~~~~~~~~~~~~~~~~~~~~~~~~~~~~~~~~~~ x_{T^j}^j \in \mathcal{P},\\
    {\bf{u}}^j & = [u_0^j,u_1^j,...u^j_{T^j}], ~u^j_k \in \mathcal{U}~~ \forall k \in [0, {T^j}],
    \end{align}
\end{subequations}
collect the inputs applied to system (1) and the corresponding state evolution. In (\ref{eq:iterationvecs}), $x_k^j$ and $u_k^j$ denote the system state and control input at time $k$ of the $j$-th iteration, and $T^j$ is the duration of the $j$-th iteration.

After $J$ number of iterations, we define the \emph{sampled safe state set} and \emph{sampled safe input set} as:
\begin{equation}\label{eq:SS}
\begin{aligned}
\mathcal{SS}^J = \textrm{}\left\{\bigcup_{j=1}^J {\bf{x}}^j \right\},~\mathcal{SU}^J = \textrm{}\left\{\bigcup_{j=1}^J {\bf{u}}^j \right\},
\end{aligned}
\end{equation}
where $\mathcal{SS}^J$ contains all states visited by the system in previous task iterations, and $\mathcal{SU}^J$ the corresponding inputs applied at each of these states.
Hence, by construction of the safe set, for any state in $\mathcal{SS}^J$ there exists a feasible input sequence contained in $\mathcal{SU}^J$ to reach the goal set $\mathcal{P}$ while satisfying state and input constraints (\ref{eq:VehicleModelConstr}). 

Similarly, we define the \emph{sampled cost set} as:
\begin{subequations} \label{eq:costSet}
    \begin{align}
    \mathcal{SQ}^J & = \textrm{}\left\{\bigcup_{j=1}^J {\bf{q}}^j \right\} \label{eq:sampledCost} \\ 
    {\bf{q}}^j &= [V^j(x_0^j), V^j(x_1^j), ..., V^j(x^j_{T^j})],
\end{align}
\end{subequations}
where $V^j(x_k^j)$ is the realized cost-to-go from state $x^j_k$ at time step $k$ of the $j$-th task execution:
\begin{equation}\label{eq:costToGo}
    V^j(x_k^j) = \sum_{i=k}^{T^j} h(x^j_i, u^j_i).
\end{equation}
The safe set based ILMPC policy tries to solve (\ref{eq:minTimeOC}) by using state and input data collected during past task iterations, stored in the sampled safe sets. At time $k$ of iteration $J+1$, we solve the optimal control problem:
\begin{align}\label{eq:minTimeLMPC}
    &V^{\mathrm{ILMPC}, J+1}(x_k^{J+1}) =  \\ 
    & \min_{u_{k|k}, ..., u_{k+N-1|k}} \sum_{t=k}^{k+N-1} h(x_{t|k}, u_{t|k}) + V^{J}(x_{t+N | k}) \nonumber \\
    & ~~~~~~~\mathrm{s.t.}~~~~~~~~~  x_{t+1 | k} = f(x_{t|k}, u_{t|k}), \forall t \in [k, k+N-1],\nonumber  \\
    & ~~~~~~~~~~~~~~~~~~~~ x_{t|k} \in \mathcal{X},~ u_{t|k} \in \mathcal{U},~~ \forall t \in [k, k+N-1], \nonumber \\
    & ~~~~~~~~~~~~~~~~~~~~ x_{k|k} = x_k^j,\nonumber \\ 
    & ~~~~~~~~~~~~~~~~~~~~ x_{k+N | k} \in \mathcal{SS}^{J} \cup \mathcal{P},\nonumber 
\end{align}
which searches for an input sequence over a chosen planning horizon $N$ that controls the system (\ref{eq:VehicleModel}) to the state in the sampled safe state set or task target set $\mathcal{P}$ with the lowest cost-to-go (\ref{eq:costToGo}). We then apply a receding horizon strategy:
\begin{align}\label{eq:ilmpc-policy}
    u(x^j_k) = \pi^{\mathrm{ILMPC}}(x^j_k)= u^\star_{k|k}.
\end{align}
A system (\ref{eq:VehicleModel}) in closed-loop with (\ref{eq:ilmpc-policy}) leads to a feasible task execution if $x_0 \in \mathcal{SS}^J$.
At each time step, the ILMPC policy searches for the optimal input based on previous task data, leading to performance improvement on the task as the sampled safe sets continue to grow with each subsequent iteration. For details on ILMPC, we refer to \cite{ugoprooftheorem}.

The sampled safe sets used in the ILMPC policy (\ref{eq:ilmpc-policy}) must first be initialized to contain at least one feasible task execution. 
The aim of TDMPC is to use data collected from a task $\mathcal{T}1$ in order to efficiently find such an execution for a new task $\mathcal{T}2$. This will induce an ILMPC policy~(\ref{eq:ilmpc-policy}) that can be used to directly solve or initialize an ILMPC for $\mathcal{T}2$. 
We approach this using \emph{subtasks}, formalized in Sec.~\ref{ssec:formulation}, and the concept of controllability. 

\textit{Definition:} A system (\ref{eq:VehicleModel}) is \emph{N-step controllable} from an initial state $x_0$ to a terminal state $x_P$ if there exists an input sequence $[u_0, u_1, ..., u_{N-1}] \in \mathcal{U}$ such that the corresponding state trajectory satisfies state constraints (\ref{eq:VehicleModelConstr}) and $x_N = x_P$. A system is \emph{controllable} from $x_0$ to $x_P$ if there exists an $N>0$ such that the system is $N$-step controllable to $x_P$. \cite[Sec 10.1]{borrelliText}

\subsection{Subtasks}\label{ssec:formulation}
Consider an iterative task $\mathcal{T}$ (\ref{eq:task}) and a sequence of $M$ subtasks, where the $i$-th \textit{subtask} $\mathcal{S}_i$ is the tuple
\begin{align} \label{eq:subtaskdef}
    \mathcal{S}_i = \{\mathcal{X}_i, \mathcal{U}_i, \mathcal{R}_i\}.
\end{align}
We take $\mathcal{X}_i \subseteq \mathcal{X}$ as the subtask workspace, $\mathcal{U}_i \subseteq \mathcal{U}$ the subtask input space, and $\mathcal{R}_i$ the set of transition states from the current subtask $\mathcal{S}_i$ workspace into the subsequent $\mathcal{S}_{i+1}$ workspace:
\begin{equation}\label{eq:transitionset}
    \mathcal{R}_i \subseteq \mathcal{X}_i = \{x \in \mathcal{X}_i : \exists u \in \mathcal{U}_i, ~f(x, u) \in \mathcal{X}_{i+1}\}. \nonumber
\end{equation}
A successful \textit{subtask execution E($\mathcal{S}_i$)} of a subtask $\mathcal{S}_i$ is a trajectory of inputs and corresponding states evolving according to (\ref{eq:VehicleModel}) while respecting state and input constraints (\ref{eq:VehicleModelConstr}), ending in the transition set.
We define the $j$-th successful execution of subtask $\mathcal{S}_i$ as

\begin{subequations}\label{eq:subtaskexec}
    \begin{align}
E^j(\mathcal{S}_i) & = [{\bf{x}}_i^j, {\bf{u}}_i^j], \\
{\bf{x}}_i^j & = [x_{0}^j,x_{1}^j,~...,x_{T_i^j}^j], ~x_k \in \mathcal{X}_i ~~ \forall k \in [0, T^j_i], \nonumber\\
&~~~~~~~~~~~~~~~~~~~~~~~~~{x^j_{T_i^j}} \in \mathcal{R}_i, \label{eq:subtasktrans}\\
    {\bf{u}}_i^j & = [u_0^j,u_1^j,~...,u_{T_i^j}^j], ~u_k \in \mathcal{U}_i ~~ \forall k \in [0, T^j_i], \nonumber
\end{align}
\end{subequations}


where the vectors ${\bf{u}}_i^j$ and ${\bf{x}}_i^j$
collect the inputs applied to the system (\ref{eq:VehicleModel}) and the resulting states, respectively,  and $x_k^j$ and $u_k^j$ denote the system state and the control input at time $k$ of subtask execution $j$. $T_i^j$ is the duration of the $j$-th execution of subtask $i$.
The final state of each successful subtask execution is in the subtask transition set, from which it can evolve into new subtasks. For the sake of notational simplicity, we have written all subtask executions as beginning at time step $k=0$. 

We say the task $\mathcal{T}$ is an ordered sequence of the $M$ subtasks ($\mathcal{T} = \{\mathcal{S}_i \}_{i=1}^M$) if the $j$-th successful task execution (\ref{eq:iterationvecs}) is equal to the concatenation of successful subtask executions:
\begin{subequations}
    \begin{align*}\label{eq:taskexecution}
    E^j(\mathcal{T}) & = [E^j(\mathcal{S}_1), E^j(\mathcal{S}_2), ..., E^j(\mathcal{S}_M)] = [{\bf{x}}^j, {\bf{u}}^j],\\
    {\bf{x}}^j & = [{\bf{x}}_1^j,{\bf{x}}_2^j,~...,{\bf{x}}_M^j],  \\
    {\bf{u}}^j & = [{\bf{u}}_1^j,{\bf{u}}_2^j,~...,{\bf{u}}_M^j], 
     \\
    f\Big(&x^j_{T^j_{[1\rightarrow i]}}, u_{T^j_{[1\rightarrow i]}} \Big)  \in \mathcal{X}_{i+1}, ~ i\in [1,M-1],\\
    x&^j_{T^j_{[1\rightarrow M]}} \in \mathcal{R}_M, 
    \end{align*}
\end{subequations}
where $T^j_{[1\rightarrow i]}$ is the duration of the first $i$ subtasks during the $j$-th task iteration.
When the state reaches a subtask transition set, the system has completed subtask $\mathcal{S}_i$, and it transitions into the following subtask $\mathcal{S}_{i+1}$. The task is completed when the system reaches the last subtask's transition set, $\mathcal{R}_M$, which we consider as the task's control invariant target set (referred to as $\mathcal{P}$ in the previous section).

\section{Task Decomposition for ILMPC}
Here we describe the intuition behind TDMPC, and provide an algorithm for the method. We also prove feasibility and iteration cost reduction of policies output by TDMPC. 

\subsection{TDMPC}
Let Task~$1$  and Task~$2$ be different ordered sequences of the same $M$ subtasks:
\begin{equation}
\begin{aligned}
    \mathcal{T}1 &=\{ \mathcal{S}_i \}_{i=1}^M, ~~ \mathcal{T}2 =\{ \mathcal{S}_{l_i} \}_{i=1}^M, \label{eq:twotasks}
\end{aligned}
\end{equation}
where the sequence $[l_1, l_2, ...,l_M]$ is a reordering of the sequence $[1,2,...,M]$.
Assume non-empty sampled safe sets $\mathcal{SS}_{[1\rightarrow M]}^J$, $\mathcal{SU}_{[1\rightarrow M]}^J$, and $\mathcal{SQ}_{[1\rightarrow M]}^J$ (\ref{eq:SS}, \ref{eq:costSet}) containing executions that solve $\mathcal{T}1$.

The goal of TDMPC is to use the executions stored in the sampled safe sets from Task~$1$ in order to find a feasible execution for Task~$2$, ending in the new target set $\mathcal{R}_{l_M}$.
The key intuition of the method is that all successful subtask executions from Task~$1$ are also successful subtask executions for Task~$2$, as this definition only depends on properties (\ref{eq:subtaskexec}) of the subtask itself, not the subtask sequence.
Therefore, only stored points at subtask transitions need to be analyzed.
Based on this intuition, Algorithm \ref{alg:MBTL} proceeds backwards through the new subtask sequence $[l_1, l_2, ...,l_M]$. The key steps are discussed below.

\begin{algorithm}
\caption{TDMPC algorithm}\label{alg:MBTL}
\begin{spacing}{1.2}
\begin{algorithmic}[1]

\State \begin{varwidth}[t]{\linewidth} \textbf{input} $\mathcal{SS}_{[1 \rightarrow M]}^J, ~\mathcal{SU}_{[1 \rightarrow M]}^J, \mathcal{SQ}_{[1 \rightarrow M]}^J$, $[l_1, l_2, ..., l_M]$ \end{varwidth}
\State $\textbf{do} ~ \text{guard set clustering} (\mathcal{SS}_{[1 \rightarrow M]}^J) ~~~~~~~~~~~~~~~~~~~~~~(\ref{eq:guardsetcluster})$

\State \textbf{initialize empty }$\hat{\mathcal{SS}}, \hat{\mathcal{SU}}$

\State $ \hat{\mathcal{SS}}_{l_M} \gets~ \bigcup_{j=1}^J {\bf{{x}}}^j_{l_M}$ 

\State ${\bf{\hat{u}}}^j_{l_M} \gets {\bf{u}}^j_{l_M} \forall j \in [1,J], ~~ \hat{\mathcal{SU}}_{l_M} \gets~ \bigcup_{j=1}^J {\bf{\hat{u}}}^j_{l_M}$,

\State $\hat{\mathcal{SQ}}_{l_M} \gets~ \bigcup_{j=1}^J V^j({\bf{x}}^j_{l_M}) $

\For{$i \in [l_{M-1} : -1 : l_1]$} 
\State $\hat{\mathcal{SS}}_{i} \gets~ \bigcup_{j=1}^J {\bf{x}}^j_{i}$ 
\State ${\bf{\hat{u}}}^j_{i} \gets {\bf{u}}^j_{i}~  \forall j \in [1,J]$

\For{$x \in \mathcal{SG}_i$}
\State $k = \{k : x \in {\bf{x}}_i^k\}$  
\State \textbf{initialize empty } $Q^\star$
\For {$j : {\bf{{x}}}^j_{i+1} \in \hat{\mathcal{SS}}_{i+1}$}

\State ${\bf{\hat{q}}}^j_{i+1} = V^j( {\bf{x}}^j_{i+1})$

\State $\textbf{solve } (Q_j^\star, u_j^\star) = \mathrm{Ctrb}(x, {\bf{x}}^j_{i+1}, {\bf{\hat{q}}}^j_{i+1})$ ~(\ref{eq:feasibility})
\EndFor

\If{$Q^\star$ not empty}
\State $j_* = \argmin_j Q^\star_j$

\State ${\bf{\hat{u}}}^k_i[-1] \gets u_{j_*}^\star $

\Else
\State $\hat{\mathcal{SS}}_{i} \gets \hat{\mathcal{SS}}_{i} \text{\textbackslash} {\bf{x}}^k_{i} , ~~ \hat{\mathcal{SU}}_{i} \gets \hat{\mathcal{SU}}_{i} \text{\textbackslash} {\bf{\hat{u}}}^k_{i}$
\EndIf
\EndFor
\State $\hat{\mathcal{SU}}_{i} \gets~ \bigcup_{j=1}^J {\bf{\hat{u}}}^j_{i}, ~~ \hat{\mathcal{SQ}}_{i} \gets~ \bigcup_{j=1}^J V^j({\bf{\hat{x}}}^j_{i})$
\EndFor

\State \begin{varwidth}[t]{\linewidth} \textbf{Return}
$\mathcal{SS}_{[l_1 \rightarrow l_M]}^0 = \bigcup_{i=l_1}^{l_M} \hat{\mathcal{SS}}_i$ \par 
\hskip \algorithmicindent \hskip \algorithmicindent $\mathcal{SU}_{[l_1 \rightarrow l_M]}^0 = \bigcup_{i=l_1}^{l_M} \hat{\mathcal{SU}}_i$ \par 
\hskip \algorithmicindent \hskip \algorithmicindent $\mathcal{SQ}_{[l_1 \rightarrow l_M]}^0 = \bigcup_{i=l_1}^{l_M} \hat{\mathcal{SQ}}_i$ 
\end{varwidth}

\end{algorithmic}
\end{spacing}
\end{algorithm}

\begin{itemize}
\item Consider subtask $\mathcal{S}_{l_M}$. 
All states from $\mathcal{S}_{l_M}$ stored in the Task~$1$ executions are controllable to $\mathcal{R}_{l_M}$ using stored inputs, i.e. there exists a stored input sequence that can be applied to the state such that the system evolves to be in $\mathcal{R}_{l_M}$. We next look for stored states from the preceding subtask,
$\mathcal{S}_{l_{M-1}}$, which are controllable to $\mathcal{R}_{l_M}$ via $\mathcal{S}_{l_M}$. (\underline{Algorithm \ref{alg:MBTL}, Lines 4-6})

Define the sampled guard set of $\mathcal{S}_{l_{M-1}}$ as
\begin{equation}\label{eq:guardsetcluster}
    \mathcal{SG}_{l_{M-1}} = \left\{ \bigcup_{j=1}^J x^j_{T^j_{[1 \rightarrow l_{M-1}]}} \right\}.
\end{equation}
The set contains those states in $\mathcal{S}_{l_{M-1}}$ from which the system transitioned into another subtask during one of the previous $J$ executions of $\mathcal{T}1$. Only controllability from the sampled guard set will be important in our approach. 

\item 
We search for the set of points in $\mathcal{SG}_{l_{M-1}}$ that are controllable to stored states in $\mathcal{S}_{l_M}$. This problem can be solved using a variety of numerical approaches. (\underline{Algorithm \ref{alg:MBTL}, Lines 9-15})

\item 
For any stored state $x$ in $\mathcal{SG}_{l_{M-1}}$ for which the controllability analysis failed, we remove the stored $\mathcal{SS}^J_{[l_{M-1}]}$ subtask execution ending in $x$ as candidate controllable states for Task~$2$. All remaining stored states in $\mathcal{S}_{l_{M-1}}$ are controllable to stored states in $\mathcal{S}_{l_M}$, and therefore also to $\mathcal{R}_{l_M}$. 
(\underline{Algorithm \ref{alg:MBTL}, Lines 15-20})
\end{itemize}

Algorithm ~\ref{alg:MBTL} iterates backwards through the remaining subtask sequence, connecting points in sampled guard sets to previously verified trajectories in the next subtask. Fig.~\ref{fig:algorithmpic} depicts this process across three subtasks from an autonomous racing task detailed in Section~\ref{ssec:trackexample}.
The algorithm terminates when it has iterated through the new subtask order, or when no states in a subtask's sampled guard set can be shown to be controllable to $\mathcal{R}_{l_M}$.
The algorithm returns sampled safe sets for Task~$2$ that have been verified through controllability to contain feasible executions of Task~$2$.
\begin{figure}
    \centering
    \includegraphics[width=0.3\textwidth]{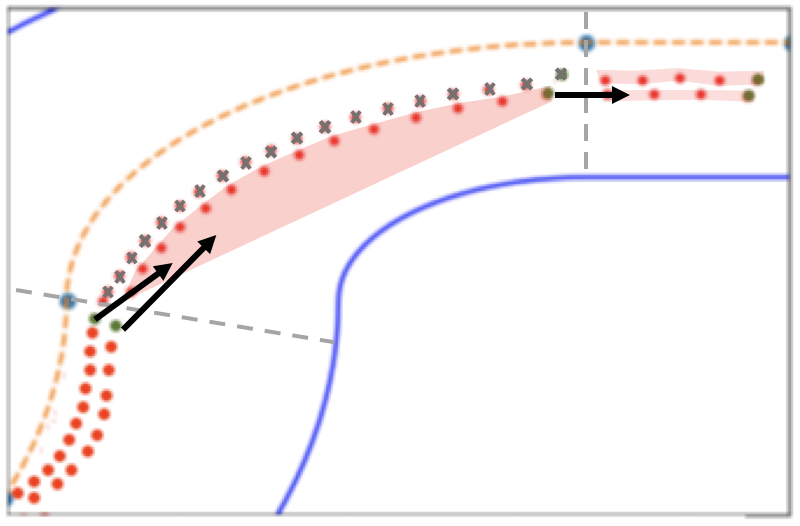}
    \caption{Algorithm ~\ref{alg:MBTL} checks controllability from states in the sampled guard set (green) to the convex hulls of safe trajectories through the next subtask (light red). If the controllability fails for a point in the sampled guard set, the backwards reachable states are removed from the safe set (grey). The centerlane is plotted in dashed yellow.}
    \label{fig:algorithmpic}
\end{figure}

TDMPC can improve on the computational complexity of existing trajectory transfer methods in two key ways: \textit{(i)}  by verifying stored trajectories only at states in the sampled guard set, rather than at each recorded time step, and \textit{(ii)} by solving a data-driven, one-step controllability problem to adapt the trajectories, rather than a multi-step or set-based controllability method.

\subsection{Properties of TDMPC-Derived Policies}
We prove feasibility and iteration cost reduction of ILMPC policies (\ref{eq:ilmpc-policy}) initialized using TDMPC. 


\begin{assumption}\label{ass:convextasks}
Task~$1$ and Task~$2$ are defined as in (\ref{eq:twotasks}), where the subtask workspaces and input spaces are given by $\mathcal{X}_i = \mathcal{X},~\mathcal{U}_i = \mathcal{U}$ for all $i \in [1,M]$.
\end{assumption}

\begin{theorem}
	\label{th1}
	\textit{(Feasibility)} Let Assumptions
	~\ref{ass:controlInvariant}-\ref{ass:convextasks} hold. 
   Assume non-empty sets $\mathcal{SS}^J_{[1\rightarrow M]}$, $\mathcal{SU}^J_{[1\rightarrow M]}$, $\mathcal{SQ}^J_{[1\rightarrow M]}$ containing trajectories of system (\ref{eq:VehicleModel}) for Task~$1$. 
    Assume Algorithm ~\ref{alg:MBTL} outputs non-empty sets $\mathcal{SS}^0_{[l_1\rightarrow l_M]}$, $\mathcal{SU}^0_{[l_1\rightarrow l_M]}$, $\mathcal{SQ}^0_{[l_1\rightarrow l_M]}$ for Task~$2$. 
    Then, if $x_0 \in \mathcal{SS}^0_{[l_1\rightarrow l_M]}$, the policy $\pi^{\mathrm{ILMPC}}_{[l_1 \rightarrow l_M]}$, as defined in (\ref{eq:ilmpc-policy}), produces a feasible execution of Task~$2$. 
\end{theorem}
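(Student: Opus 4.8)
The plan is to reduce the claim to the defining property of an ILMPC safe set — that every state it contains is controllable to the target set through stored inputs while respecting the constraints (\ref{eq:VehicleModelConstr}) — and then to establish that property for the output sets by a backward induction mirroring Algorithm~\ref{alg:MBTL}. The excerpt already records that system (\ref{eq:VehicleModel}) in closed loop with the receding-horizon policy (\ref{eq:ilmpc-policy}) yields a feasible task execution whenever $x_0$ lies in a valid safe set \cite{ugoprooftheorem}; so it suffices to show that $\mathcal{SS}^0_{[l_1\rightarrow l_M]}$, together with its inputs $\mathcal{SU}^0_{[l_1\rightarrow l_M]}$, is a valid safe set for Task~$2$ — i.e. that from every contained state there is an input sequence drawn from $\mathcal{SU}^0_{[l_1\rightarrow l_M]}$ that keeps the state in $\mathcal{X}$, the inputs in $\mathcal{U}$, and drives the system in finite time to the Task~$2$ target $\mathcal{R}_{l_M}$.

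First I would record the reuse observation the excerpt highlights: because a successful subtask execution (\ref{eq:subtaskexec}) is characterized only by properties of the subtask itself and not by its position in the sequence, every stored Task~$1$ subtask execution $E^j(\mathcal{S}_i)$ is again a valid subtask execution for Task~$2$; moreover, under Assumption~\ref{ass:convextasks} (where $\mathcal{X}_i=\mathcal{X}$ and $\mathcal{U}_i=\mathcal{U}$) each such execution automatically satisfies the global constraints (\ref{eq:VehicleModelConstr}). This is what permits the stored trajectories to be reused verbatim away from the transition points, so that only the seams between subtasks require new analysis.

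The core is then an induction running backward through the Task~$2$ order $[l_1,\dots,l_M]$, with hypothesis: every state retained in $\hat{\mathcal{SS}}_{l_i}$ is controllable to $\mathcal{R}_{l_M}$ through inputs in the output input set while satisfying (\ref{eq:VehicleModelConstr}). For the base case $i=M$, every state of a stored execution $E^j(\mathcal{S}_{l_M})$ reaches $\mathcal{R}_{l_M}$ by applying the tail of its own stored input sequence ${\bf u}^j_{l_M}$ (Lines 4--6), which establishes the hypothesis. For the inductive step, take a retained state in $\hat{\mathcal{SS}}_{l_i}$. If it precedes the guard state of its execution, the unchanged stored inputs carry it along the stored trajectory to the guard state $x\in\mathcal{SG}_{l_i}$; at the guard state the algorithm has replaced the last input by $u_{j_*}^\star$ solving the one-step controllability problem (\ref{eq:feasibility}), so by the definition of $N$-step controllability $f(x,u_{j_*}^\star)$ is exactly a stored state of the successor subtask $\mathcal{S}_{l_{i+1}}$, with $u_{j_*}^\star\in\mathcal{U}$ and the transition respecting $\mathcal{X}$. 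The induction hypothesis makes that successor state controllable to $\mathcal{R}_{l_M}$, and concatenating the two input sequences gives a constraint-respecting sequence from the original state to $\mathcal{R}_{l_M}$. Guard states whose controllability check fails, together with their backward trajectories, are deleted on Lines 15--20 and therefore do not appear in the output, so the hypothesis survives for every retained state.

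Taking the union over $i=1,\dots,M$ shows that $\mathcal{SS}^0_{[l_1\rightarrow l_M]}$ is a valid Task~$2$ safe set; Assumption~\ref{ass:controlInvariant} then guarantees that the terminal set $\mathcal{R}_{l_M}=\mathcal{P}$ is control invariant, so the execution can be sustained upon arrival and the terminal constraint of (\ref{eq:minTimeLMPC}) remains satisfiable, and the ILMPC feasibility result yields the claim for any $x_0\in\mathcal{SS}^0_{[l_1\rightarrow l_M]}$. I expect the main obstacle to be the bookkeeping at the subtask seams: arguing carefully that the replaced final input $u_{j_*}^\star$ lands the guard state precisely on the \emph{initial} state of a retained successor trajectory (so that the concatenation is well defined and never leaves $\mathcal{X}$ or $\mathcal{U}$), and confirming that the deletions on Lines 15--20 remove exactly the states that would otherwise sever the controllability chain, leaving the induction hypothesis intact for every surviving state.
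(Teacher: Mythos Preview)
Your proposal is correct and follows essentially the same approach as the paper: establish that every state in the output set $\mathcal{SS}^0_{[l_1\rightarrow l_M]}$ lies on a feasible trajectory to $\mathcal{R}_{l_M}$, then invoke control invariance of the terminal set and standard MPC recursive-feasibility arguments. The paper's own proof simply asserts the first property without detailing the backward induction you spell out, so your version is more thorough but structurally identical.
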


\begin{proof}
At every state $x_k$, the ILMPC policy (\ref{eq:ilmpc-policy}) searches for a sequence of inputs $[u_k, u_{k+1}, ..., u_{k+N-1}]$ such that, when applied to the system (\ref{eq:VehicleModel}), the resulting state $x_{k+N}$ is in $\mathcal{SS}^0_{[l_1 \rightarrow l_M]}$ or the target set $\mathcal{R}_{l_M}$. 

Since all states in $\mathcal{SS}^0_{[l_1 \rightarrow l_M]}$ are either stored as part of feasible trajectories to $\mathcal{R}_{l_M}$ or are directly in $\mathcal{R}_{l_M}$, such a sequence of inputs can always be found, and (\ref{eq:minTimeLMPC}) always has a solution:
\begin{equation}
\begin{aligned}
\forall~  &  x_k \in ~\mathcal{SS}_{[l_1 \rightarrow l_M]}^0, ~\exists~ [u_k, u_{k+1}, ..., u_{k+N-1}] \in ~\mathcal{U}: \nonumber \\ 
& x_{k+N} \in \mathcal{SS}_{[l_1 \rightarrow l_M]}^0  \subseteq \mathcal{X}. \nonumber
\end{aligned}
\end{equation}
As the terminal constraint set in \ref{eq:minTimeLMPC} is itself an invariant set, recursive feasibility follows from standard MPC arguments \cite{borrelliText}. It follows that the policy $\pi^{ILMPC}_{[l_1 \rightarrow l_M]}$ produces feasible trajectories for Task~$2$.


\end{proof}

The above Theorem \ref{th1} implies that the safe sets designed by the TDMPC algorithm induce an ILMPC policy that can be used to successfully complete Task~$2$ while satisfying all input and state constraints.


\begin{assumption}
\label{ass:pidpolicy}
Consider Task~$1$ and Task~$2$ as defined in  (\ref{eq:twotasks}).  
The trajectories stored in $\mathcal{SS}^J_{[1\rightarrow M]}$ and  $\mathcal{SU}^J_{[1\rightarrow M]}$ correspond to executions of Task~$1$ by a nonlinear system (\ref{eq:VehicleModel}). One stored trajectory corresponds to an execution of (\ref{eq:VehicleModel}) in closed-loop with a policy $\pi^0(\cdot)$ that is feasible for both Task~$1$ and Task~$2$.
\end{assumption}


\begin{theorem}
\label{th2}
\textit{(Cost Improvement)} 
    Let Assumptions ~\ref{ass:convextasks}-\ref{ass:pidpolicy} hold. 
    Then, Algorithm ~\ref{alg:MBTL} will return non-empty sets $\mathcal{SS}^0_{[l_1\rightarrow l_M]}$, $\mathcal{SU}^0_{[l_1\rightarrow l_M]}$, $\mathcal{SQ}^0_{[l_1\rightarrow l_M]}$ for Task~$2$.
    Furthermore, if $x_0 \in \mathcal{SS}^1_{[1 \rightarrow M]} $, an ILMPC initialized using $\mathcal{SS}^0_{[l_1 \rightarrow l_M]}$ will incur no higher iteration cost during an execution of Task~$2$ than an ILMPC initialized using a trajectory corresponding to (\ref{eq:VehicleModel}) in closed-loop with $\pi^0(\cdot)$.
\end{theorem}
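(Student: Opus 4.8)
The plan is to derive both conclusions from two ingredients: the feasibility of $\pi^0$ for Task~2 granted by Assumption~\ref{ass:pidpolicy}, which certifies that the controllability subproblems solved inside Algorithm~\ref{alg:MBTL} are solvable, and the standard ILMPC bound that the closed-loop iteration cost never exceeds the cost-to-go stored at the initial state, so that a safe set carrying a cheaper stored trajectory through $x_0$ yields a cheaper first iteration.

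First I would establish non-emptiness of the returned sets by walking the backward loop of Algorithm~\ref{alg:MBTL}. By Assumption~\ref{ass:pidpolicy}, executing $\pi^0$ in the Task~2 subtask order yields a feasible Task~2 trajectory whose segments are valid subtask executions---this is precisely the order-independence of subtask executions observed in Sec.~\ref{ssec:formulation}. At each transition from $\mathcal{S}_{l_i}$ to $\mathcal{S}_{l_{i+1}}$ this trajectory realizes a one-step controllable connection from a guard state in $\mathcal{R}_{l_i}$ into the next subtask. I would use this connection to witness feasibility of the corresponding $\mathrm{Ctrb}$ problem~(\ref{eq:feasibility}), so that the set $Q^\star$ is never empty and the deletion branch never removes every candidate. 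Propagating this from $\mathcal{S}_{l_M}$ back to $\mathcal{S}_{l_1}$ keeps every $\hat{\mathcal{SS}}_i$ non-empty and returns non-empty $\mathcal{SS}^0_{[l_1\rightarrow l_M]}$, $\mathcal{SU}^0_{[l_1\rightarrow l_M]}$, $\mathcal{SQ}^0_{[l_1\rightarrow l_M]}$.

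For the cost statement I would first call on Theorem~\ref{th1} so that the ILMPC built on $\mathcal{SS}^0_{[l_1\rightarrow l_M]}$ is recursively feasible from $x_0$ and its iteration cost is well defined. The central step is the standard ILMPC inequality from \cite{rosolia2016learning, ugoprooftheorem}: the closed-loop iteration cost is upper bounded by the minimum cost-to-go stored at the initial state, and this bound is non-increasing as the safe set is enlarged. I would then show that the trajectory Algorithm~\ref{alg:MBTL} reconstructs from $\pi^0$'s stored subtask segments lies in $\mathcal{SS}^0_{[l_1\rightarrow l_M]}$, reaches $\mathcal{R}_{l_M}$, and carries cost-to-go at $x_0$ no larger than $V^{\pi^0}(x_0)$: at each transition the algorithm selects the input minimizing the stored cost-to-go ($j_* = \argmin_j Q^\star_j$), so the reconnection cannot inflate the aggregate cost beyond what $\pi^0$ itself pays. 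Since $x_0 \in \mathcal{SS}^1_{[1\rightarrow M]}$ lies on this $\pi^0$ data, the minimum stored cost-to-go at $x_0$ in the TDMPC safe set is at most $V^{\pi^0}(x_0)$; the ILMPC inequality then bounds the first Task~2 iteration cost of the TDMPC-initialized controller by this quantity, which is exactly the iteration cost of the $\pi^0$ trajectory used to initialize the comparison ILMPC.

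The hard part will be the middle link of that chain: rigorously matching the trajectory that Algorithm~\ref{alg:MBTL} assembles---Task~1 subtask segments reconnected in the Task~2 order with re-solved transition inputs---to the genuine Task~2 execution of $\pi^0$, since the two differ once the entry state into each subtask shifts with the ordering. Closing this gap is exactly where Assumption~\ref{ass:convextasks} enters: with $\mathcal{X}_i=\mathcal{X}$ and $\mathcal{U}_i=\mathcal{U}$ the per-subtask segments stay feasible under any ordering, and I would use this to argue that the $\mathrm{Ctrb}$ minimization at each guard point never yields a higher aggregate cost-to-go than $\pi^0$'s own transitions. I expect to spend most of the effort certifying this cost bound, treating the non-emptiness argument and the invocation of the ILMPC inequality as routine.
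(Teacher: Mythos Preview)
Your approach is essentially the paper's: both argue non-emptiness by observing that the stored $\pi^0$ trajectory, being feasible for Task~2 by Assumption~\ref{ass:pidpolicy}, survives Algorithm~\ref{alg:MBTL} intact, giving the containment $\mathcal{SS}^1_{[1\rightarrow M]} \subseteq \mathcal{SS}^0_{[l_1\rightarrow l_M]}$; and both obtain the cost inequality from the fact that the TDMPC-initialized ILMPC optimizes over a superset of the $\pi^0$-initialized one.

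Two differences are worth noting. First, the paper's cost step is a single one-step comparison (for horizon $N=1$, $\min_u h(x_0,u)+V^j(x_p^j) \leq h(x_0,\pi^0(x_0))+V^j(x_p^j)$) and then \emph{explicitly defers} the passage from improved ILMPC value to improved closed-loop iteration cost, stating that this part ``is not presented here.'' Your plan to invoke the ILMPC monotonicity bound from \cite{rosolia2016learning,ugoprooftheorem} is the natural way to close that gap, and in that sense your proposal is more complete than the paper's own proof. Second, your final paragraph overcomplicates matters relative to the paper: the paper does not attempt to ``reassemble'' $\pi^0$'s Task~1 subtask segments in the Task~2 order and then compare the result to a genuine Task~2 run of $\pi^0$. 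It simply asserts that, since $\pi^0$ is feasible for Task~2, the stored execution \emph{is already} a successful Task~2 execution and is therefore retained verbatim by the algorithm---no reconnection costs enter, and the $\argmin_j Q^\star_j$ selection is not needed for this particular trajectory. Under Assumption~\ref{ass:convextasks} that reading is the intended one, and it lets you bypass the difficulty you flagged.
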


\begin{proof}
Define the vectors
\begin{subequations}
\begin{align}
 {\bf{x}}^1 & = \mathcal{SS}_{[1 \rightarrow M]}^1 \subseteq \mathcal{SS}_{[1 \rightarrow M]}^J, \\
 {\bf{u}}^1 & = \pi^0({\bf{x}}^1) = \mathcal{SU}_{[1 \rightarrow M]}^1 \subseteq ~\mathcal{SU}_{[1 \rightarrow M]}^J,
\end{align}
\end{subequations}
to be the stored state and input trajectory associated with the implemented policy $\pi^0(\cdot)$.
Since $\pi^0(\cdot)$ is also feasible for Task~$2$, when Algorithm ~\ref{alg:MBTL} is applied, the entire task execution can be stored as a successful execution for Task~$2$ without adapting the policy. 
It follows that $\mathcal{SS}_{[1 \rightarrow M]}^1 \subseteq ~\mathcal{SS}^0_{[l_1 \rightarrow l_M]}$ and $ \mathcal{SU}_{[1 \rightarrow M]}^1 \subseteq \mathcal{SU}_{[l_1 \rightarrow l_M]}^0$, and the returned sample safe sets for Task~$2$ are non-empty.

At the initial state $x_0$, the ILMPC policy (\ref{eq:ilmpc-policy}) optimizes the chosen input so as to minimize the remaining cost-to-go. Consider an MPC planning horizon of $N=1$ (though this extends directly for any $N \geq 1$). 
Trivially,
\begin{align}
    &\min_u ~ h(x_{0},u) + V^j(x_p^j) ~~~~ \leq ~~~ h(x_{k},\pi^0(x_0)) + V^j(x_p^j)\nonumber \\
    & ~~\text{s.t.}   ~u \in \mathcal{U}, ~~~~~~~~~~~~~~~~~~~~~~~~~~ \text{s.t.} ~ f(x_{0},\pi^0(x_0)) = x_{p}^j.  \nonumber \\
    & ~~~~~ f(x_{0},u) = x_{p}^j. \nonumber
\end{align}
It follows that the cost incurred by a Task $2$ execution with $\pi^{ILMPC}_{[l_1 \rightarrow l_M]}$ is no higher than an execution with $\pi^{ILMPC}_{\pi^0}$. 
\end{proof}
The proof that the improved closed-loop iteration cost follows from the improved ILMPC cost (\ref{eq:minTimeLMPC}) is not presented here. 
However, the result shown holds for the examples presented in Sec.~\ref{ssec:trackexample}- \ref{ssec:rppexample}.

\subsection{Discussion}
In the examples presented in this paper, we implement the search for controllable points (Algorithm \ref{alg:MBTL}, Line 15) by solving a one-step controllability problem, $(Q^\star, u^\star)  = \mathrm{Ctrb}(x, \bf{z}, \bf{q})$, where
    \begin{align}
    u^\star, \lambda^\star &=\argmin_{u, \lambda} ~ h(x,u) + \lambda^\top\bf{q} \label{eq:feasibility}\\
&~~~~\textrm{s.t. }~f(x, u) = \lambda^\top\bf{z}, \nonumber\\
& ~~~~~~~~~\sum \lambda_i = 1,~ \lambda_i \geq 0 , \nonumber\\
& ~~~~~~~~~u \in \mathcal{U}_i,\nonumber \\ Q^\star &= \lambda^{\star\top} \bf{q}, \label{eq:convexCost}
    \end{align}
where \textbf{z} is a previously verified state trajectory through the next Task~$2$ subtask, and \textbf{q} the sampled cost vector associated with the trajectory.
(\ref{eq:feasibility}) aims to find an input that connects the sampled guard state $x$ to a state in the convex hull of the trajectory \cite[Sec 4.4.2]{borrelliText}.
If such an input is found, the new cost-to-go (\ref{eq:costToGo}) for the state $x$ is taken to be the convex combination of the stored cost vector (\ref{eq:convexCost}).
Solving the controllability analysis to the convex hull is an additional method for reducing computational complexity of TDMPC and is exact only for linear systems with convex constraints.

The points of subtask transition should be defined as is most useful, given the two tasks. 
Subtask transition points simply indicate which segments of the stored trajectories are certain to remain feasible in $\mathcal{T}2$ using the stored policies - but this can change depending on how exactly $\mathcal{T}2$ differs from $\mathcal{T}1$.
The TDMPC method is therefore not limited in applicability to a predetermined number of reshuffled tasks. 

\section{Application 1: Autonomous Racing}\label{ssec:trackexample}

\begin{figure}[t!]
    \centering
    \includegraphics[scale=0.45]{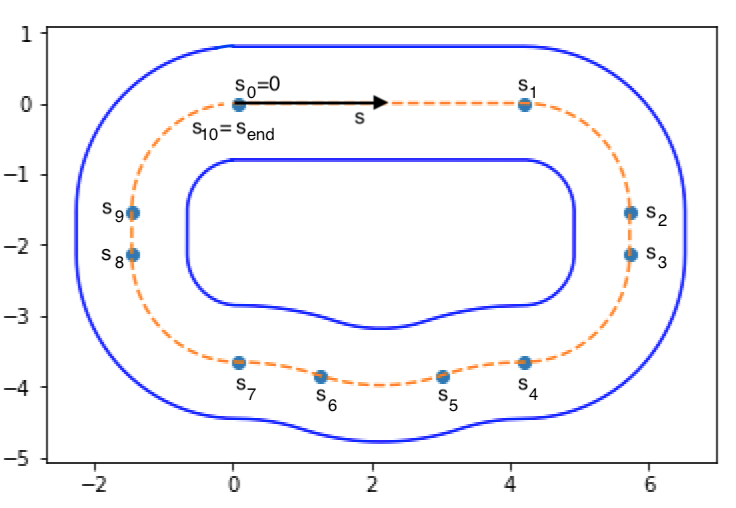}
    \caption{Each subtask of the racing task corresponds to a segment of the track with constant curvature. The vehicle state $s$ tracks the distance traveled along the centerline.}
    \label{fig:tracktask}
\end{figure}
\subsection{Task Formulation}
Consider an autonomous racing task, in which a vehicle is controlled to minimize lap time driving around a race track with piecewise constant curvature (Fig.~\ref{fig:tracktask}). We model this task as a series of ten subtasks, where the $i$-th subtask corresponds to a section of the track with constant radius of curvature $c_i$. Tasks with different subtask order are tracks consisting of the same road segments in a different order.

The vehicle is modeled in the curvilinear abscissa reference frame \cite{rajamani2011vehicle}, with states and inputs at time step $k$
\begin{subequations}
\begin{align}
    x_k & = [v_{x_k} ~ v_{y_k} ~ \dot\psi_{k} ~ e_{\psi_k} ~ s_k ~ e_{y_k} ]^\top,\nonumber\\
    u_k &= [a_{k} ~ \delta_k ]^\top, \nonumber
\end{align}
\end{subequations}
where $v_{x_k}$, $v_{y_k}$, and $\dot{\psi}_{k}$ are the vehicle's longitudinal velocity, lateral velocity, and yaw rate, respectively, at time step $k$, $s_k$ is the distance travelled along the centerline of the road, and $e_{\psi_k}$ and $e_{y_k}$ are the heading angle and lateral distance error between the vehicle and the path. The inputs are longitudinal acceleration $a_k$ and steering angle $\delta_k$. The system dynamics (\ref{eq:VehicleModel}) are described using an Euler discretized dynamic bicycle model \cite{rosolia2017autonomousrace}.
Accordingly, the system state and input spaces are 
\begin{subequations}
    \begin{align}
    \mathcal{X} &= \mathbb{R}^6, ~ \mathcal{U} = \mathbb{R}^2. \nonumber
    \end{align}
\end{subequations}
We formulate each subtask according to (\ref{eq:subtaskdef}), with:
\subsubsection{Subtask Workspace $\mathcal{X}_i$}
\begin{equation}
    \mathcal{X}_i = \left\{ x : \begin{bmatrix}0 \\ -\frac{\pi}{2}~ \mathrm{rad} \\ s_{i-1} \\ -\frac{l}{2} \end{bmatrix} \leq  \begin{bmatrix}v_x \\ e_{\psi} \\ s \\ e_y \end{bmatrix}  \leq   \begin{bmatrix}3~\mathrm{m/s} \\ \frac{\pi}{2} ~\mathrm{rad}\\ s_{i} \\ \frac{l}{2} \end{bmatrix} \right \}, \nonumber
\end{equation}
where $s_{i-1}$ and $s_{i}$ mark the distances along the centerline to the start and end of the curve, and $l = 0.8 $ is the lane width in meters. $s_{10} = s_{\mathrm{end}}$ is the total length of the track.
These bounds indicate that the vehicle can only drive forwards on the track, up to a maximum velocity, and must stay within the lane. 

\subsubsection{Subtask Input Space $\mathcal{U}_i$}
\begin{equation}
    \mathcal{U}_i = \begin{bmatrix}-1~{\mathrm{m/s^2}} \\ -0.5~{\mathrm{rad/s^2}} \end{bmatrix} \leq  \begin{bmatrix}a \\ \delta \end{bmatrix}  \leq   \begin{bmatrix}1~{\mathrm{m/s^2}} \\ 0.5~{\mathrm{rad/s^2}} \end{bmatrix}. \nonumber
\end{equation}
The input limits are a function of the vehicle, and do not change between subtasks.

\subsubsection{Subtask Transition Set, $\mathcal{R}_i$}
Lastly, we define the subtask transition set to be the states along the subtask border where the track's radius of curvature changes:
\begin{equation}
    \mathcal{R}_i = \{x \in \mathcal{X}_i: \exists u\in\mathcal{U}_i \text{, s.t. } s^+ = s_{i+1}\}, \nonumber
\end{equation}
where $x^+ = f(x,u)$.
The task target set is the race track's finish line, 
\begin{equation}
    \mathcal{R}_M = \left\{ x : s \geq s_{\mathrm{end}} \right \}. \nonumber
\end{equation}
The task goal is to complete a lap and reach the target set as quickly as possible. Therefore we define the stage cost as:
\begin{equation}
    h(x_k, u_k) = \begin{cases}
			0, & x_k \in \mathcal{P}\\
            1, & \mathrm{otherwise}.
		 \end{cases} \nonumber
\end{equation}

\subsection{Simulation Setup}
An ILMPC (\ref{eq:ilmpc-policy}) is used to complete $J=5$ executions of Task~$1$, the track depicted in Fig.~\ref{fig:tracktask}. The vehicle begins each task iteration at standstill on the centerline at the start of the track. The $J$ executions and their costs are stored in $\mathcal{SS}_{[1 \rightarrow M]}$, $\mathcal{SU}_{[1 \rightarrow M]}$, and $\mathcal{SQ}_{[1 \rightarrow M]}$.
An initial trajectory for the ILMPC safe sets is executed using a centerline-tracking, low-velocity PID controller, $\Pi_0$. 

TDMPC then uses these sampled safe sets to design initial policies for a new track composed of the same track segments.
Two ILMPCs are designed for the reconfigured track: one initialized with TDMPC, and another initialized with $\Pi_0$. Each ILMPC completes $J=10$ laps around the new tracks.
In this examples, the reconfigured track is not continuous, and should be considered to be a segments of larger, continuous track.

\subsection{Simulation Results}
\begin{figure}
\centering
\begin{subfigure}[b]{0.5\textwidth}
        \centering
        \includegraphics[width=0.75\textwidth]{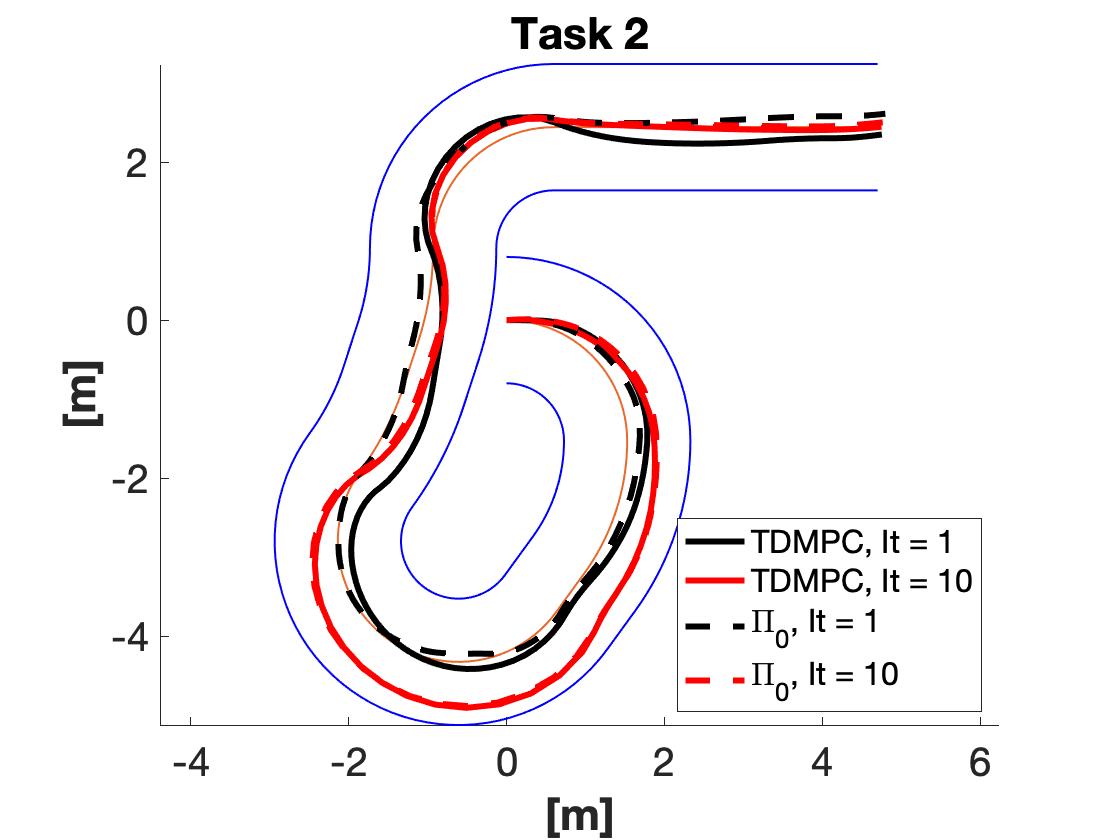}
\end{subfigure}
\begin{subfigure}[b]{0.5\textwidth}
        \centering
        \includegraphics[width=0.75\textwidth]{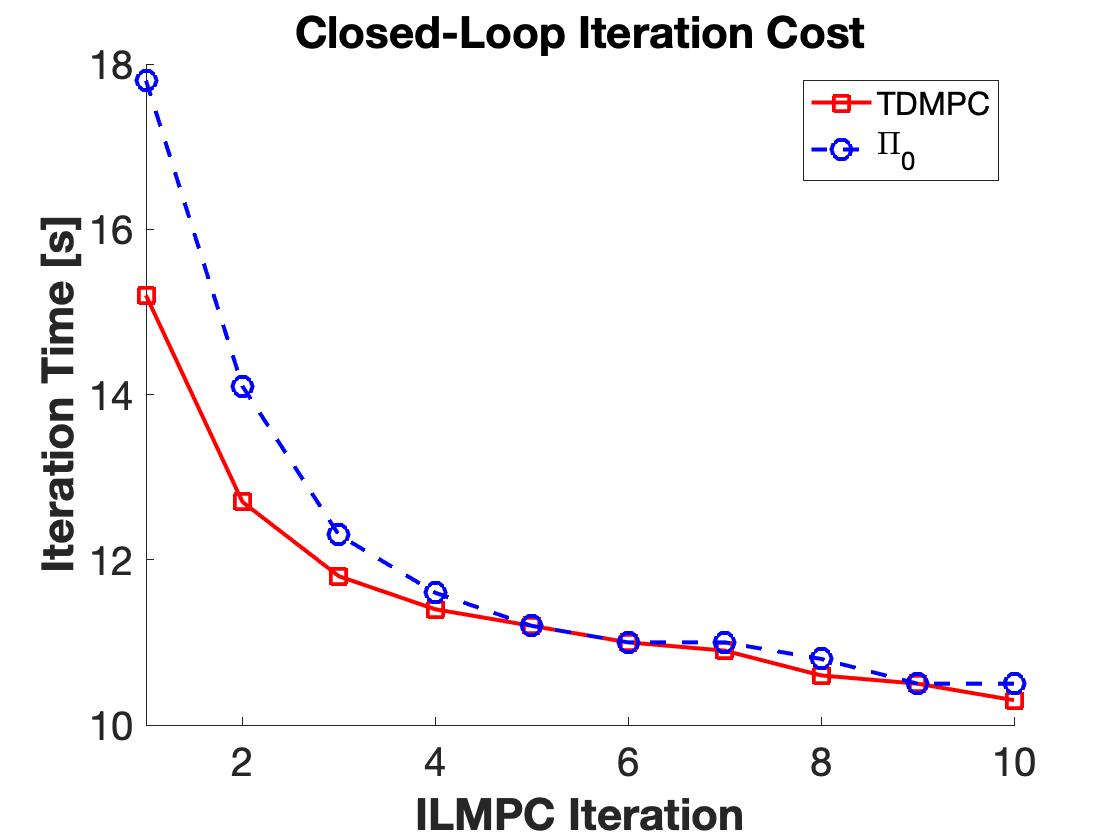}
\end{subfigure}
\caption{The TDMPC-initialized ILMPC converges to a locally optimal trajectory faster than the PID-initialized one.}
\label{fig:singleShuffleResults}
\end{figure}

Fig.~\ref{fig:singleShuffleResults} compares the first and tenth trajectories around the track of the two ILMPCs, plotted as black and red lines. 
The $\Pi_0$-initialized ILMPC (in dashed black) initially stays close to the centerline, taking nearly $18$ seconds to traverse the new track. The TDMPC-initialized ILMPC, however, traverses the new track more efficiently starting with the first lap. 
The first lap completed using the TDMPC-initialized ILMPC (in solid black) begins  closer to the final locally optimal policies (in red) that both ILMPCs eventually converge to. 
In this example, the TDMPC method is able to leverage experience on another track in order to complete sections of the new track in a locally optimal way, even on the first iteration of a new task. 

The lap time of each of the ten ILMPC iterations is plotted in the bottom of Fig.~\ref{fig:singleShuffleResults}.
As expected, the TDMPC-initialized ILMPC completes the first several laps faster than the $\Pi_0$-initialized ILMPC. 
The TDMPC-initialized ILMPC requires fewer task iterations and less time per iteration to reach a locally optimal trajectory. 

\section{Application 2: Robotic Path Planning}\label{ssec:rppexample}

TDMPC can also be used to combine knowledge gained from solving a variety of previous tasks.
For example, if $D$ ILMPCs as in (\ref{eq:ilmpc-policy}) complete $J$ iterations of $D$ different tasks, TDMPC can be used to design a policy for a task $(D+1)$. The algorithm draws on subtask executions collected over $D$ different tasks in order to build safe sets for Task $D+1$. 
We evaluated this approach in a robotic path planning example. 
\begin{figure}
    \centering
    \includegraphics[scale=0.3]{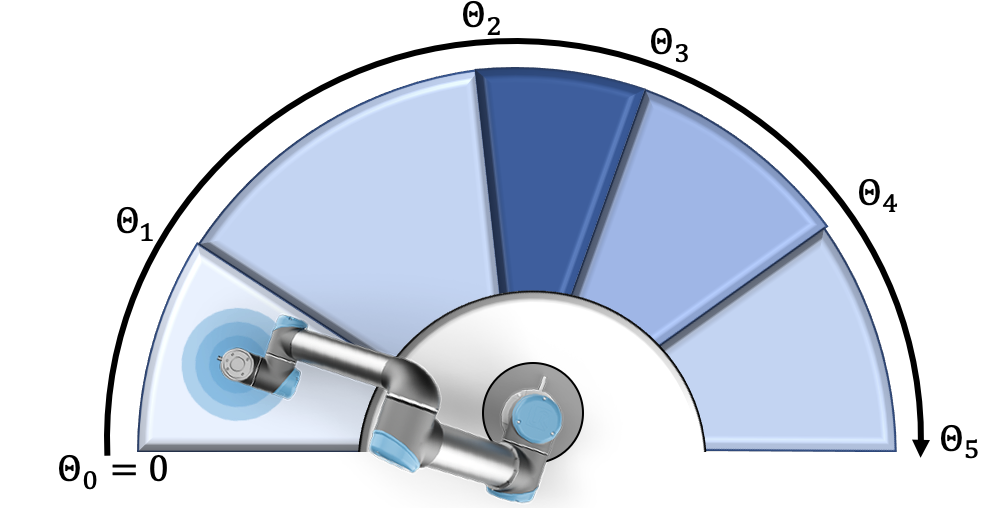}
    \caption{Topview of the robotic path planning task. Each subtask corresponds to an obstacle in the environment with constant height.}
    \label{fig:robottaskpic}
\end{figure}

\subsection{Task Formulation}
Consider a task in which a UR5e \footnote{https://www.universal-robots.com/products/ur5-robot/} robotic arm needs to move an object to a target without colliding with obstacles (Fig.~\ref{fig:robottaskpic}).
The obstacles are modeled as extruded disks of varying heights above and below the robot, leaving a workspace space between $h_{\mathrm{min},i}$ and $h_{\mathrm{max},i}$.
Here, each subtask corresponds to the workspace above a particular obstacle. 
Different subtask orderings correspond to a rearranging of the obstacle locations.

The end-effector reference tracking accuracy of the UR5e allows us to use a simplified model in robot experiments, in place of a discretized second-order model as in \cite{robothard3}. We solve the task in the reduced state space:
\begin{subequations}
\begin{align}
    {x}_k & = [q_{0_k} ~ \dot{q}_{0_k} ~ z_k ~ \dot{z}_k]^\top, \nonumber \\
    {u}_k &= [\ddot{q}_{0_k} ~ \ddot{z}_k]^\top, \nonumber
\end{align}
\end{subequations}
where $z_k$ is the height of the robot end-effector at time step $k$, calculated from the joint angles via forward kinematics, and $\dot{z}_k$ the upward velocity. We control $\ddot{q}_{0_k}$ and $\ddot{z}_k$, the accelerations of $q_0$ and $z$, respectively. 
The system state and input spaces are 
\begin{subequations}
    \begin{align}
    \mathcal{X} &= \mathbb{R}^{4}, ~  \mathcal{U} = \mathbb{Rb}^2. \nonumber
    \end{align}
\end{subequations}
We model the simplified system as a quadruple integrator:
\begin{equation}\label{eq:simplifiedRobotModel}
    {x}_{k+1} = \begin{bmatrix}1 & dt & 0 & 0 \\ 0 & 1 & 0 & 0\\ 0 & 0 & 1 & dt \\ 0 & 0 & 0 & 1 \end{bmatrix} {x}_k + \begin{bmatrix}0 & 0 \\ dt & 0 \\ 0 & 0 \\ 0 & dt \end{bmatrix} {u}_k,
\end{equation}
where $dt = 0.01$ seconds is the sampling time. This simplified model holds as long as we operate within the region of high end-effector reference tracking accuracy, characterized in previous experiments.


We formulate each subtask according to (\ref{eq:subtaskdef}).

\subsubsection{Subtask Workspace $\mathcal{X}_i$}
\begin{equation}
\begin{aligned}
    \tilde{\mathcal{X}}_i =  \begin{bmatrix}\theta_{i-1} \\ -\pi ~ \mathrm{rad/s} \\ h_{\mathrm{min}, i} \\ -1 ~ \mathrm{m/s} \end{bmatrix} \leq  \begin{bmatrix}q_0 \\ \dot{q}_{0k} \\ z_k \\ \dot{z}_k  \end{bmatrix}  \leq   \begin{bmatrix}\theta_i \\ \pi ~ \mathrm{rad/s} \\ h_{\mathrm{max},i} \\ 1 ~
    \mathrm{m/s} \end{bmatrix} \nonumber
\end{aligned}
\end{equation}
where $\theta_{i-1}$ and $\theta_{i}$ mark the cumulative angle to the beginning and end of the $i$-th obstacle, as in Fig.~\ref{fig:robottaskpic}. States $\dot{q}_{0k}$ and $\dot{z}_k$ are constrained to lie in the experimentally determined region of high end-effector tracking accuracy. The robot end-effector is constrained not to collide with the subtask obstacle.

\subsubsection{Subtask Input Space $\mathcal{U}_i$}
\begin{equation}
 \tilde{\mathcal{U}}_i = \begin{bmatrix} -\pi ~ \mathrm{rad/s^2} \\ -0.6 ~ \mathrm{m/s^2} \end{bmatrix} \leq  \begin{bmatrix} \ddot{q}_{0k} \\ \ddot{z}_k  \end{bmatrix}  \leq   \begin{bmatrix}\pi ~ \mathrm{rad/s^2} \\ 0.6 ~ \mathrm{m/s^2} \end{bmatrix}, \nonumber
\end{equation}
where $\ddot{q}_{0k}$ and $\ddot{z}_k$ are constrained to lie in the experimentally determined region of high end-effector tracking accuracy.

\subsubsection{Subtask Transition Set, $\mathcal{R}_i$}
We define the subtask transition set to be the states along the subtask border where the next obstacle begins:
\begin{equation}
    \mathcal{R}_i = \{x \in \mathcal{X}_i: \exists u\in\mathcal{U}_i \text{, s.t. } q_0^+ \geq \theta_{i}~ \}, \nonumber
\end{equation}
where $x^+ = f(x,u)$.
The task target set is the end of the last mode:
\begin{equation}
    \mathcal{R}_M = \left\{ x : q_0 = \theta_M,~h_{\mathrm{min},M} \leq z \leq h_{\mathrm{max},M} \right \}.\nonumber
\end{equation}
The task goal is to reach the target set as quickly as possible:
\begin{equation}
    h(x_k, u_k) = \begin{cases}
			0, & x_k \in \mathcal{P}\\
            1, & \mathrm{otherwise}.
		 \end{cases}\nonumber
\end{equation}


\subsection{Experimental Setup}
An ILMPC (\ref{eq:ilmpc-policy}) was used to complete $J=10$ executions of five different training tasks, where each training task corresponded to a reordering of the obstacles. 
In each task, the ILMPC tries to reach the target set as quickly as possible while avoiding the obstacles.
Each ILMPC was initialized with a trajectory resulting from executing a policy $\Pi_0$ that tracks the center height of each mode with the end-effector, while the robot rotated at a low constant joint velocity $\dot{q}_0$. 
TDMPC was then applied to the combined sampled safe sets of the five training tasks, and used to design an initial policy for a new ILMPC on an unseen ordering of obstacles, shown in Fig.~\ref{fig:robotResults}. The white space corresponds to environment obstacles, so that the ILMPC task is to reach the end of the last mode as quickly as possible while controlling the end-effector to remain within the safe (green) part of the state space. A second ILMPC was initialized with the center-height tracking $\Pi_0$, for comparison. 
After initialization, the two ILCMPs completed $J=20$ iterations of the new task. These iterations were executed in simulation using the simplified model (\ref{eq:simplifiedRobotModel}), and the first and last trajectories of each ILMPC were then tracked by a real UR5e robot using end-effector tracking. 

\subsection{Experimental Results}

\begin{figure}
\centering
\begin{subfigure}[b]{0.5\textwidth}
        \centering
        \includegraphics[width=0.75\textwidth]{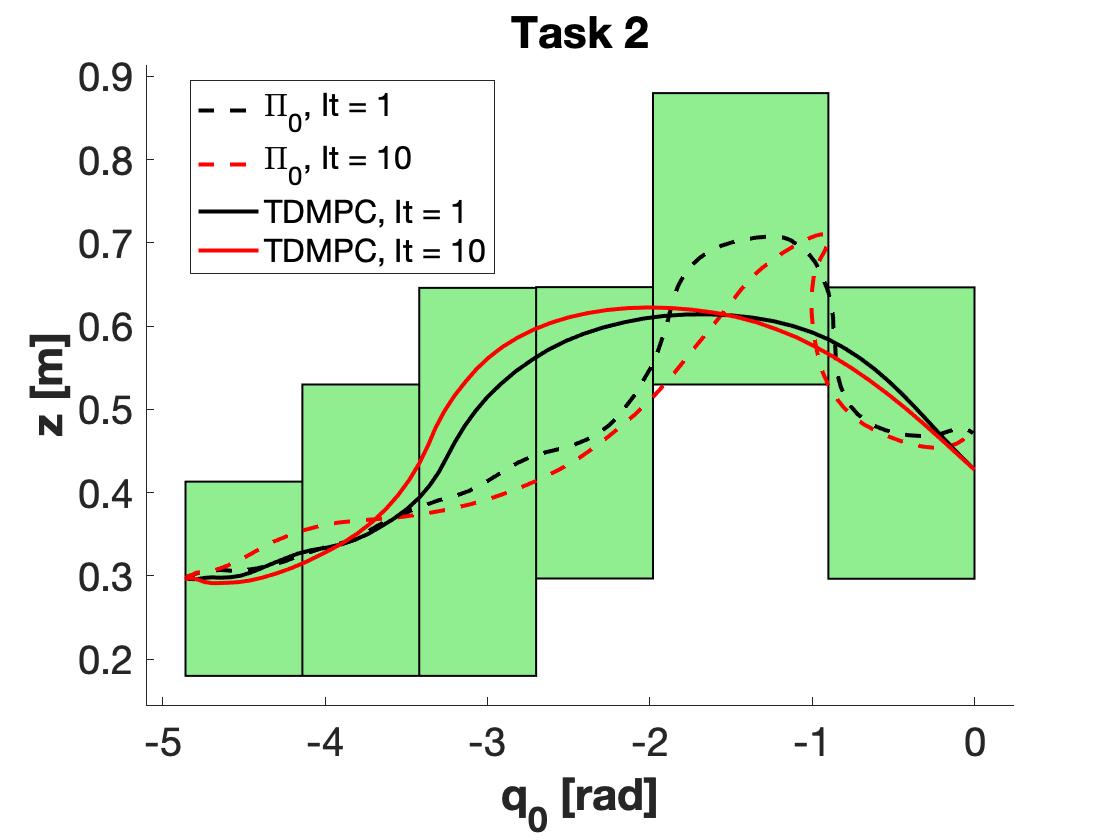}
\end{subfigure}
\begin{subfigure}[b]{0.5\textwidth}
        \centering
        \includegraphics[width=0.75\textwidth]{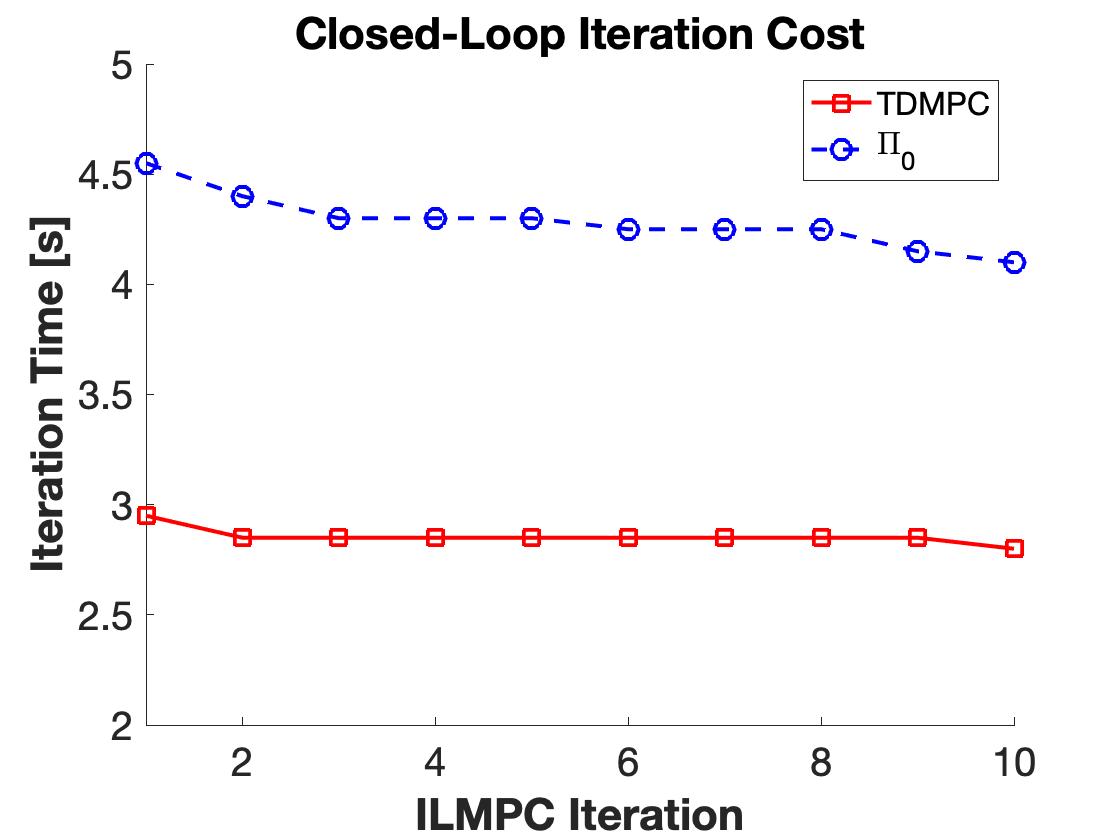}
\end{subfigure}
\caption{The TDMPC-initialized ILMPC solves $\mathcal{T}2$ much faster than the ILMPC initialized with a center-height tracking policy $\Pi_0$.}
\label{fig:robotResults}
\end{figure}

The measured robot trajectories are plotted in  Fig.~\ref{fig:robotResults}. 
The $\Pi_0$-initialized ILMPC follows the center-height of each mode closely during the first task iteration (plotted in dashed black). After ten iterations of the task, the resulting trajectory (plotted in dashed red) has only diverged from the center-height trajectory slightly. Correspondingly, after ten iterations the $\Pi_0$-initialized ILMPC still requires more than four seconds to complete the task.

The TDMPC-initialized ILMPC, however, draws on knowledge gathered over many previous tasks in order to solve the task efficiently right away. Already on the first trajectory (plotted in solid black), the TDMPC-initialized ILMPC solves the task in under three seconds. This is a $30\%$ improvement over the $\Pi_0$-initialized ILMPC.
As in the autonomous driving task, the first trajectory completed by the TDMPC-initialized ILMPC is very close to the ultimate locally optimal trajectory. 

Because of the nonconvex obstacles, this task is nonconvex, and there are many locally optimal trajectories. At various iterations of the task, both the TDMPC-initialized and the $\Pi_0$-initialized ILMPCs get stuck at such local minima, so that the ILMPCs performance metric remained constant over several iterations before improving again (Fig.~\ref{fig:robotResults}). At these performance plateaus, the realized trajectories continue to change. We believe that the variability in the mixed integer solver used in the ILMPC led the ILMPC to follow different trajectories with the same iteration cost, as if encouraging exploration. Some of these different trajectories then allowed for performance improvement in the next iteration.

\section{Conclusion}
A task decomposition method for ILMPC was presented.
The TDMPC algorithm uses stored state and input trajectories from executions of a task, and efficiently designs policies for executing variations of that task. TDMPC breaks tasks into subtasks and performs controllability analysis at sampled safe states between subtasks. 
The algorithm can improve upon other methods by only needing to verify and adapt the original task policy at points of subtask transition, rather than along the entire trajectory.
We evaluate the effectiveness of the proposed algorithm on autonomous racing and robotic manipulation tasks. Our results confirm that TDMPC allows an ILMPC to converge to a locally-optimal minimum-time trajectory faster than using simple methods. 


\bibliographystyle{IEEEtran}
\bibliography{IEEEabrv,mybib}

\end{document}